\newcommand{\deref}{\!\rightarrow\!}
\newcommand{\lbbar}{\{\kern-0.6ex[}
\newcommand{\rbbar}{]\kern-0.6ex\}}
\newcommand{\dyn}[2]{#1\,\text{:\textsubscript{\texttt{dyn}}}\,#2}
\newcommand{\st}[2]{#1\,\text{:\textsubscript{\texttt{st}}}\,#2}
\newcommand{\Base}[2]{#1_{\texttt{#2}}}
\newcommand{\cted}[2]{\texttt{cted}(#1,#2)}
\newcommand{\true}{\texttt{true}}
\newcommand{\nul}{\texttt{null}}
\newcommand{\refines}[2]{\mathrel{{}^{#1}{\Rightarrow}^{#2}}}
\newcommand{\Cpp}{C\texttt{++}}
\newtheorem{theorem}{Theorem}
\newtheorem{assumption}{Assumption}
\newtheorem{lemma}{Lemma}
\lstdefinestyle{proof}{%
escapebegin=\color{NavyBlue},
language=C++,
tabsize=1,
mathescape=true,
columns=fullflexible,
morekeywords={pred,req,ens,skip,in,let,null},
basicstyle=\small
}
\title{Verifying \Cpp{} Dynamic Binding}
\author{Niels Mommen}
\affiliation{%
    \institution{imec-DistriNet Research Group, KU Leuven}
    \department{Department of Computer Science}
    \city{Leuven}
    \postcode{3000}
    \country{Belgium}
}
\email{niels.mommen@kuleuven.be}
\author{Bart Jacobs}
\affiliation{%
    \institution{imec-DistriNet Research Group, KU Leuven}
    \department{Department of Computer Science}
    \city{Leuven}
    \postcode{3001}
    \country{Belgium}
}
\email{bart.jacobs@kuleuven.be}
\begin{document}

\begin{abstract}
We propose an approach for modular verification of programs written in an object-oriented language where, like in \Cpp{}, the same virtual method call is bound to different methods at different points during the construction or destruction of an object. Our separation logic combines Parkinson and Bierman's abstract predicate families with essentially explicitly tracking each subobject's vtable pointer. Our logic supports polymorphic destruction. Virtual inheritance is not yet supported. We formalised our approach and implemented it in our VeriFast tool for semi-automated modular formal verification of \Cpp{} programs.
\end{abstract}

\maketitle

\section{Introduction}

Despite the rise of safer alternatives like Rust, \Cpp{} is still an extremely widely-used language, often for code that is safety- or security-critical \cite{dotnet_CLR,fuchsia,openjdk}. Modular formal verification can be a powerful tool for gaining assurance that programs satisfy critical safety or security requirements; however, so far no modular formal verification approaches have been proposed for \Cpp{} programs. There has been much work on modular verification of C programs, and on modular verification of object-oriented languages, including languages with multiple inheritance. However, these are not directly applicable to \Cpp{}, in large part due to its peculiar semantics of dynamic binding during object construction and destruction. In this paper, we propose what we believe to be the first Hoare logic \cite{DBLP:journals/cacm/Hoare69} for an object-oriented language that reflects \Cpp{}'s semantics of dynamic binding in the presence of constructors and destructors. Our separation logic \cite{ReynoldsJ.C.2002Slal} combines Parkinson and Bierman's abstract predicate families \cite{ParkinsonMatthew2005Slaa,ParkinsonMatthewJ.2008Slaa} with essentially explicitly tracking each subobject's vtable pointer. Our logic also supports polymorphic destruction (applying the $\mathbf{delete}$ operator to an expression whose static type is a supertype of its dynamic type). Virtual inheritance, however, is not yet supported.

The remainder of this paper is structured as follows. In \S\ref{sec:lang} we introduce the syntax and operational semantics of the minimal \Cpp-like language that we will use to present our approach. In \S\ref{sec:logic} we introduce our separation logic. In \S\ref{sec:proof_outline} we illustrate an example annotated with a proof outline of our program logic. We end with a discussion of related work (\S\ref{sec:related-work}) and a conclusion (\S\ref{sec:conclusion}).

\section{A Minimal \Cpp-like Language}\label{sec:lang}

\begin{figure}
\centering
\begin{equation*}
\begin{array}{r l}
v \Coloneqq & \texttt{null} \mid o
\\
e \Coloneqq & v \mid x \mid e \deref f \mid \texttt{new}\ C(\overline{e}) \mid ({C*})\ e
\\
c \Coloneqq & \texttt{let}\ x \coloneqq e\ \texttt{in}\ c \mid \texttt{delete}\ e \mid e \deref f \coloneqq e \\
&\mid e \deref C{::} m(\overline{e}) \mid e \deref m(\overline{e}) \mid c; c \mid \texttt{skip}
\\
field \Coloneqq & f \coloneqq \texttt{null};
\\
pred \Coloneqq & \texttt{pred}\ p (\overline{x}) = P;
\\
meth \Coloneqq & \texttt{virtual}\ m(\overline{x})\ \texttt{req}\ P\ \texttt{ens}\ Q\ {\{}c{\}}
\\
ctor \Coloneqq & C(\overline{x})\ \texttt{req}\ P\ \texttt{ens}\ Q : \overline{C(\overline{e})}\ {\{}c{\}}
\\
dtor \Coloneqq & \texttt{virtual}\ \texttt{\textasciitilde}C()\ \texttt{req}\ P\ \texttt{ens}\ Q\ {\{}c{\}}
\\
class \Coloneqq & \texttt{class}\ C : \overline{C}\ {\{} \overline{field}\ \overline{pred}\ ctor\ dtor\ \overline{meth} {\}};
\\
prog \Coloneqq & \overline{class}\ c
\end{array}
\end{equation*}
\caption{Syntax of the minimal language}
\label{fig:syntax_of_formal_language}
\end{figure}

The syntax of our minimal object-oriented programming language is shown in Fig.~\ref{fig:syntax_of_formal_language}. We assume infinite disjoint sets $\mathcal{C}$ of class names, $\mathcal{M}$ of method names, $\mathcal{F}$ of field names, and $\mathcal{X}$ of variable names, ranged over by symbols $C$, $m$, $f$, and $x$, respectively. We assume $\texttt{this} \in \mathcal{X}$. For now, we also assume a set $\mathcal{A}$ of assertions, ranged over by $P$ and $Q$. We will define the syntax of assertions in \S\ref{sec:logic}.

A program consists of a sequence of class definitions, followed by a command that gets executed when the program starts.
For the remainder of the formal treatment, we fix a program $prog$. Whenever we use a class $class$ as a proposition, we mean $class \in prog$.

For all $C$, we define the set $bases(C)$ as the set of all direct base classes of $C$: 
\begin{equation*}
\texttt{class}\ C : \overline{C}\ {\{} \cdots {\}} \Rightarrow bases(C) = \{\overline{C}\}
\end{equation*}

An \emph{object pointer} $o \in \mathcal{O}$ is either an \emph{allocation pointer} of the form $(id : C{*})$ where $id \in \mathbb{N}$ is an \emph{allocation identifier}, or a \emph{subobject pointer} of the form $o_C$ where $o$ is an object pointer:
$$ o \in \mathcal{O} \Coloneqq (id : C{*})\ |\ o_C $$

We use notation $\st{o}{C}$ to denote the object pointed to by $o$ has static type $C$:
$$\st{o}{C} \Leftrightarrow (\exists id.\;o = (id: C{*})) \vee (\exists o'.\ o = o'_C)$$

Notice that for simplicity, the values of our language are only the object pointers and the \nul{} value. Furthermore, fields and other variables are untyped and hold scalar values only. That is, objects never appear on the stack or as (non-base) subobjects of other objects.

We define a heap, ranged over by $h$, as a finite set of resources. Resources, ranged over by $\alpha$, are defined as follows:
\begin{equation*}
\alpha \Coloneqq \texttt{alloc}(id) \mid \texttt{cted}(o) \mid o \deref f \mapsto v \mid \dyn{o}{C}
\end{equation*}
where $\texttt{alloc}(id)$ means that an object with allocation identifier $id$ has been allocated, $\texttt{cted}(o)$ means that the object pointed to by $o$ (always an allocation pointer) has been fully constructed and is not yet being destructed. Resource $o \deref f \mapsto v$ means that field $f$ of the object pointed to by $o$ has value $v$, and $\dyn{o}{C}$ means that the dynamic type of the object pointed to by $o$ (always a \emph{leaf object}, whose class has no bases)\footnote{This corresponds to the fact that in \Cpp{}, objects that have polymorphic base subobjects can reuse the (first) polymorphic base subobject's vtable pointer. Note: in this paper, for simplicity we do not consider non-polymorphic classes, i.e.~classes that do not declare or inherit any virtual members.} is $C$.

We define $dtype(o, C)$ as the set of all $\dyn{}{}$ resources of its leaf base objects, or its own $\dyn{}{}$ resource when it does not have any base objects, given that $\st{o}{C'}$:
\begin{equation*}
dtype(o, C) \overset{\operatorname{def}}{=}\left\{
\begin{array}{l l}
\{ \dyn{o}{C} \} & bases(C') = \emptyset \\
\underset{1 \leq i \leq n}{\bigcup} dtype(o_{C_i}, C) & bases(C') = C_{1} \dots C_{n}
\end{array}
\right.
\end{equation*}

We say an object pointed to by $o$ has dynamic type $C$ in a heap $h$ if and only if $dtype(o, C) \subseteq h$. Notice that a non-leaf object has dynamic type $C$ if and only if all of its bases have dynamic type $C$. As we will see, dynamically dispatched calls on an object $o$ are dispatched to the dynamic type of $o$. If an object $o$ has no dynamic type in our language, dynamically dispatched calls get stuck. As we will also see, an object $o$ has no dynamic type while its bases are being constructed or destructed, nor while unrelated (i.e.~neither enclosed nor enclosing) subobjects of the allocation are being constructed or destructed. It has a dynamic type only while its own constructor's or destructor's body, or the body of an enclosing object's constructor or destructor is executing, and between the point where its enclosing allocation is fully constructed and the point where it starts being destructed.

We use $o \downarrow C$ (\emph{$o$ downcast to $C$}) to denote the pointer to the enclosing object of class $C$ of the object pointed to by $o$:
\begin{mathpar}
\inferrule
{
\st{o}{C}
}
{
o \downarrow C = o
}
\and
\inferrule
{
o \downarrow C = o'
}
{
o_{C'} \downarrow C = o'
}
\end{mathpar}
\begin{figure*}
\begin{mathpar}
\inferrule[OUpcast]
{
h, e \Downarrow  h', o \\
\st{o}{C} \\
C' \in bases(C) \\
}
{
h, ({C'*})\ e \Downarrow  h', o_{C'}
}
\and
\inferrule[ONew]
{
o = (id: C{*}) \\
id = \text{min}\{id\ |\ \texttt{alloc}(id) \notin h\}\\\\
h \uplus \lbbar \texttt{alloc}(id) \rbbar, o \deref C(\overline{e}) \Downarrow  h'
}
{
h, \texttt{new}\ C(\overline{e}) \Downarrow  h' \uplus \lbbar \texttt{cted}(o) \rbbar, o
}
\and
\inferrule[ODelete]
{
o' = o \downarrow C \\
h, e \Downarrow  h'  \uplus \lbbar \texttt{cted}(o') \rbbar, o \\\\
h', o' \deref \texttt{\textasciitilde}C() \Downarrow h''
}
{
h, \texttt{delete}(e) \Downarrow h''
}
\and
\inferrule[OStaticDispatch]
{
\texttt{class}\ C \cdots {\{} \cdots \texttt{virtual}\ m(\overline{x}) {\{} c {\}} \cdots {\}}\\
\st{o}{C} \\\\
h, e \Downarrow  h', o \\
 h', \overline{e} \Downarrow h'', \overline{v} \\
h'', c[o/\texttt{this},\overline{v}/\overline{x}] \Downarrow h'''
}
{
h, e \deref C\text{::}m(\overline{e}) \Downarrow h'''
}
\and
\inferrule[ODynamicDispatch]
{
\texttt{class}\ C \cdots {\{} \cdots \texttt{virtual}\ m(\overline{x}) {\{} c {\}} \cdots {\}} \\
h, e \Downarrow  h', o \\
o' = o \downarrow C \\\\
 h', \overline{e} \Downarrow h'', \overline{v} \\
dtype(o, C) \subseteq h'' \\
h'', c[o'/\texttt{this},\overline{v}/\overline{x}] \Downarrow h'''
}
{
h, e \deref m(\overline{e}) \Downarrow h'''
}
\and
\inferrule[OConstruct]
{
\texttt{class}\ C : C_1 \ldots C_n {\{} \overline{f \coloneqq \texttt{null};} \cdots C(\overline{x}) : C_1(\overline{e_1}) \ldots C_n(\overline{e_n}) {\{} c {\}} \cdots {\}} \\\\
h, \overline{e} \Downarrow h_0, \overline{v} \\\\
h_0, o_{C_1} \deref C_1(\overline{e_1}[o/\texttt{this},\overline{v}/\overline{x}]) \Downarrow h_1 \uplus dtype(o_{C_1}, C_1) \\\\
\ldots \\\\
h_{n-1}, o_{C_n} \deref C_n(\overline{e_n}[o/\texttt{this},\overline{v}/\overline{x}]) \Downarrow h_n \uplus dtype(o_{C_n}, C_n) \\\\
h_n \uplus \lbbar \overline{o \deref f \mapsto \texttt{null}} \rbbar \uplus dtype(o, C), c[o/\texttt{this},\overline{v}/\overline{x}] \Downarrow  h'
}
{
h, o \deref C(\overline{e}) \Downarrow  h'
}
\and
\inferrule[ODestruct]
{
\texttt{class}\ C : C_1 \ldots C_n {\{} \overline{f \coloneqq \texttt{null};} \cdots \texttt{virtual}\ \texttt{\textasciitilde}C() {\{} c {\}} \cdots {\}} \\\\
h, c[o/\texttt{this}] \Downarrow h_n \uplus dtype(o, C) \uplus \lbbar \overline{o \deref f \mapsto v} \rbbar \\\\
h_n \uplus dtype(o_{C_n}, C_n), o_{C_n} \deref \texttt{\textasciitilde}C_n() \Downarrow h_{n-1} \\\\
\ldots \\\\
h_1 \uplus dtype(o_{C_1}, C_1), o_{C_1} \deref \texttt{\textasciitilde}C_1() \Downarrow h_0
}
{
h, o \deref \texttt{\textasciitilde}C() \Downarrow h_0
}
\end{mathpar}
\caption{Operational semantics of the minimal language related to allocation and deallocation, construction and destruction, and method dispatching.}\label{fig:opsem}
\end{figure*}

We use $h, e \Downarrow h', v$ to denote that when evaluated in heap $h$, expression $e$ evaluates to value $v$ and post-heap $h'$. Similarly, we use $h, c \Downarrow h'$ and $h, o\deref C(\overline{e}) \Downarrow h'$ and $h, o\deref\texttt{\textasciitilde}C() \Downarrow h'$ to denote that command $c$, constructor call $o\deref C(\overline{e})$, and destructor call $o\deref\texttt{\textasciitilde}C()$, when executed in heap $h$, terminate with post-heap $h'$, respectively. These judgments are defined by mutual induction; we show selected rules in Fig.~\ref{fig:opsem}. (The complete set of rules can be found in the appendix.)

Notice, first of all, that a statically dispatched call $e\deref C{::}m(\overline{e})$ gets stuck if class $C$ does not declare a method $m$, even if some base does declare such a method: in our minimal language, classes do not inherit methods from their bases. The same holds for dynamically dispatched calls.\footnote{Of course, a program that does rely on method inheritance can be trivially translated into  our minimal language by inserting overrides that simply delegate to the appropriate base. Importantly, however, those overrides will have to be verified as part of the correctness proof (see \S\ref{sec:logic}); their correctness does not hold automatically.}

Evaluation of $\texttt{new}\ C(\overline{e})$ picks an unused allocation identifier $id$ and produces (i.e.~adds to the heap) $\texttt{alloc}(id)$ to mark it as used, then executes the constructor call, and finally produces $\texttt{cted}(o)$ to mark $o$ as fully constructed.

Executing a constructor call $o\deref C(\overline{e})$ is somewhat involved. If $C$ has no bases, the argument expressions are evaluated, the fields are produced, $\dyn{o}{C}$ is produced, and the constructor body is executed. Considered together with \textsc{ODynamicDispatch}, this means that dynamically dispatched calls on $\texttt{this}$ in the constructor body are dispatched to class $C$ itself, even if $C$ is not the most derived class of the allocation.

Now consider the case where $C$ does have bases. Executing constructor call $o\deref C(\overline{e})$ evaluates the argument expressions and then executes each base class' constructor on the corresponding base subobject. After executing the constructor for base $C_i$, $dtype(o_{C_i}, C_i)$ is consumed (i.e.~removed from the heap); after all base subobjects have been initialized, $dtype(o, C)$ is produced. This means that, during execution of the body of the constructor of class $C$, dynamically dispatched calls on $o$ or on any base subobject of $o$ are dispatched to class $C$. After an allocation of class $C$ is fully constructed, and until it starts being destructed, its dynamic type (and that of all of its subobjects) is $C$.

Execution of a destructor call $o \deref \texttt{\textasciitilde}C()$ performs the exact reverse process: it executes the destructor body, consumes $dtype(o, C)$ and the fields, and destructs the base subobjects. Before destructing the subobject for base $C_i$, $dtype(O_{C_i}, C_i)$ is produced, so that during execution of the body of the destructor of an object $o$ of class $C$, dynamically dispatched calls on $o$ are dispatched to class $C$. After destruction of an allocation completes, only the $\texttt{alloc}$ resource remains, to ensure that no future allocation is assigned the same identifier.\footnote{This reflects the fact that pointers in \Cpp{} become invalid permanently after the allocation they point to is deallocated, even if some future allocation happens to reuse the same address.}

Deleting an object gets stuck unless its enclosing allocation is fully constructed and is not yet being destructed, as indicated by the presence of the \texttt{cted} resource. Since this resource always holds an allocation pointer, it is always the entire allocation that is destroyed, even if the argument to \texttt{delete} is a pointer to a subobject.

We use judgments $h, e\,\texttt{div}$ and $h, c\,\texttt{div}$ and $h, o\deref C(\overline{e})\,\texttt{div}$ and $h, o\deref\texttt{\textasciitilde}C()\,\texttt{div}$ to denote that an expression, command, constructor call, or destructor call diverges (i.e.~runs forever without terminating or getting stuck), respectively. These judgments' definitions can be derived mechanically \cite{dagnino} from the definitions of the termination judgments and are therefore elided.

\section{A Program Logic for \Cpp{} Dynamic Binding}\label{sec:logic}
A class definition in our language includes a list of abstract predicates. A predicate declaration in a class defines its entry for the corresponding predicate family, i.e., a class defines its own definition for the abstract predicate, which can be overridden by derived classes. As we will see, predicate assertions involve a \emph{class index} to refer to the definition of the predicate declared in that class.

We use a context $\Gamma$, which is a sequence of class definitions.

\subsection{Assertions}
Predicate definitions, method specifications, constructor specifications, and destructor specifications consist of assertions, ranged over by $P$ and $Q$:
$$\begin{array}{r l}
P, Q \Coloneqq & \texttt{true} \mid \texttt{false} \mid P \wedge Q \mid P \vee Q \mid P \ast Q \mid \exists x.\ P \\
& \mid \varepsilon \deref f \mapsto \varepsilon \mid \varepsilon \deref p_\varepsilon(\overline{\varepsilon}) \mid \texttt{cted}(\varepsilon, \varepsilon) \mid \dyn{\varepsilon}{\varepsilon} \\
\nu \Coloneqq & v \mid C\\
\varepsilon \Coloneqq & x \mid \nu
\end{array}$$
where $P \ast Q$ is the separating conjunction of assertions $P$ and $Q$, which informally means that assertion $P$ and $Q$ must be satisfied in disjoint portions of the heap. Assertion $\varepsilon \deref p_{\varepsilon'}(\overline{\varepsilon''})$ is a predicate assertion $p$ with class index $\varepsilon'$ on the target object pointed to by $\varepsilon$.

We show the semantics of the most interesting assertions:
\begin{equation*}
\begin{array}{l r l}
I,h \vDash o \deref p_C(\overline{\nu}) & \Leftrightarrow & \exists o'.\;o \downarrow C = o' \land (h, o', p, C, \overline{\nu}) \in I \\
I,h \vDash \texttt{cted}(o, C) & \Leftrightarrow & \exists o'.\;o \downarrow C = o' \land \texttt{cted}(o') \in h \\
I,h \vDash \dyn{o}{C} & \Leftrightarrow & dtype(o, C) \subseteq h \\
I,h \vDash o \deref f \mapsto v & \Leftrightarrow & o \deref f \mapsto v \in h
\end{array}
\end{equation*}
where $I,h \vDash P$ means that assertion $P$ is satisfied, given heap $h$ and interpretation of predicates $I$. An interpretation of predicates is the least fixpoint of the program's predicate definitions considered together.

We define the assertion weakening relation $\Gamma \vdash P \Rightarrow_a Q$ by induction, where every judgment $P \Rightarrow_a Q$ should be read as $\Gamma \vdash P \Rightarrow_a Q$: 
\begin{mathparpagebreakable}
\inferrule[ADyntype]
{
\st{o}{C}\\
bases(C) = C_1 \ldots C_n\\
n > 0
}
{
\dyn{o}{C'} \Leftrightarrow_a \dyn{o_{C_1}}{C'} \ast \ldots \ast \dyn{o_{C_n}}{C'}
}
\and
\inferrule[AFrame]
{
P \Rightarrow_a P'
}
{
P \ast Q \Rightarrow_a P' \ast Q
}
\and
\inferrule[ATrans]
{
P \Rightarrow_a P'\\
P' \Rightarrow_a P''
}
{
P \Rightarrow_a P''
}
\and
\inferrule[AMoveCted]
{
\st{o}{C}\\\\
C' \in bases(C)\\
C' \neq C''
}
{
\texttt{cted}(o, C'') \Leftrightarrow_a \texttt{cted}(o_{C'}, C'')
}
\and
\inferrule[AImply]
{
\forall I,h.\ I,h \vDash P \Rightarrow I,h \vDash P'
}
{
P \Rightarrow_a P'
}
\and
\inferrule[AMovePred]
{
\st{o}{C}\\\\
C' \in bases(C)\\
C' \neq C''
}
{
o \deref p_{C''}(\overline{\nu}) \Leftrightarrow_a o_{C'} \deref p_{C''}(\overline{\nu})
}
\and
\inferrule[APredDef]
{
\st{o}{C}\\
\texttt{class}\ C \cdots {\{} \cdots \texttt{pred}\ p(\overline{x}) = P \cdots {\}} \in \Gamma
}
{
o \deref p_C(\overline{\nu}) \Leftrightarrow_a P[o/\texttt{this},\overline{\nu}/\overline{x}]
}
\end{mathparpagebreakable}

Weakening rule \textsc{APredDef} allows to switch between a predicate assertion and the definition of the predicate corresponding to the class index. The class index must be a class name declared in the program.

\textsc{AMovePred} and \textsc{AMoveCted} allow to \emph{transfer} predicate and \texttt{cted} assertions between base and derived objects. It is not possible to transfer such an assertion to an object whose dynamic type is a subtype of the predicate index and allocation class, respectively.

Weakening rule \textsc{ADyntype} states that the dynamic type assertion of a non-leaf object can be exchanged for all dynamic type assertions of its direct base objects. This means that the dynamic type of a base object can be retrieved if the dynamic type of its direct derived object is known. The other way around, it is possible to derive the dynamic type of a derived object if the dynamic type of all its direct base classes is known.

\subsection{Expression and command verification}
The verification rules for the most interesting expressions and commands are listed in Fig.~\ref{fig:verrules}, together with the verification rules for constructor and destructor invocations. These rules are related to object allocation and deallocation, and static and dynamic dispatching. (The complete set of verification rules can be found in the appendix). 

In method and destructor specifications, we use special variable $\theta$ to refer to the class of the target object of the call. This variable is assumed to be equal to the containing class during verification of the method or destructor. This is sound, because we require that a class overrides all methods of all its direct base classes, as we will later see. Hence when a call is dynamically dispatched, it will always be bound to the method declared in the class corresponding with the dynamic type of the target object.

Variable $\theta$ is substituted with the dynamic type of the target object and the static type of the target object during verification of dynamically dispatched calls and statically dispatched calls, respectively. This mechanism allows to use the specification for the method or destructor in the class corresponding to the static type of the method or destructor target.

\begin{figure*}
\begin{mathpar}
\inferrule[HStaticDispatch]
{
\texttt{class}\ C \cdots {\{} \cdots \texttt{virtual}\ m(\overline{x})\ \texttt{req}\ P\ \texttt{ens}\ Q \cdots {\}} \in \Gamma \\
\st{o}{C}
}
{
\{ P[o/\texttt{this},C/\theta,\overline{v}/\overline{x}]\}\ o \deref C\text{::}m(\overline{v})\ \{ Q[o/\texttt{this}, C/\theta, \overline{v}/\overline{x}] \}
}
\and
\inferrule[HNew]
{
\texttt{class}\ C \cdots {\{} \cdots C(\overline{x})\ \texttt{req}\ P\ \texttt{ens}\ Q \cdots {\}} \in \Gamma
}
{
\{ P[\overline{v}/\overline{x}] \}\ \texttt{new}\ C(\overline{v})\ \{ Q[\overline{v}/\overline{x}, \texttt{result}/\texttt{this}] \ast \texttt{cted}(\texttt{result}, C) \}
}
\and
\inferrule[HDynamicDispatch]
{
\texttt{class}\ C \cdots {\{} \cdots \texttt{virtual}\ m(\overline{x})\ \texttt{req}\ P\ \texttt{ens}\ Q \cdots {\}} \in \Gamma \\
\st{o}{C}
}
{
\{ \dyn{o}{C'} \wedge P[o/\texttt{this}, C'/\theta, \overline{v}/\overline{x}] \}\ o \deref m(\overline{v})\ \{ Q[o/\texttt{this}, C'/\theta, \overline{v}/\overline{x}] \}
}
\and
\inferrule[HDestruct]
{
\texttt{class}\ C \cdots {\{} \cdots \texttt{virtual \textasciitilde} C()\ \texttt{req}\ P\ \texttt{ens}\ Q \cdots {\}} \in \Gamma
}
{
\{ P[o/\texttt{this}, C/\theta] \}\ o \deref \texttt{\textasciitilde}C()\ \{ Q \}
}
\and
\inferrule[HDelete]
{
\st{o}{C} \\\\
\texttt{class}\ C \cdots {\{} \cdots \texttt{virtual \textasciitilde} C()\ \texttt{req}\ P\ \texttt{ens}\ Q \cdots {\}} \in \Gamma
}
{
\{ \texttt{cted}(o, C') \ast P[o/\texttt{this}, C'/\theta] \}\ \texttt{delete}(o)\ \{ Q \}
}
\and
\inferrule[HUpcast]
{
\st{o}{C'}\\
C \in bases(C')
}
{
\{ P[o_C/\texttt{result}] \}\ (C{*})\ o\ \{ P \}
}
\and
\inferrule[HConstruct]
{
\texttt{class}\ C \cdots {\{} \cdots C(\overline{x})\ \texttt{req}\ P\ \texttt{ens}\ Q \cdots {\}} \in \Gamma
}
{
\{ P[\overline{v}/\overline{x}]) \}\ o\deref C(\overline{v})\ \{ Q[\overline{v}/\overline{x}, o/\texttt{this}] \}
}
\end{mathpar}
\caption{Verification rules related to allocation and deallocation, construction and destruction, and method dispatching. Read judgment $\{ P \}\ c\ \{ Q \}$ as $\Gamma \vdash \{ P \}\ c\ \{ Q \}$.}\label{fig:verrules}
\end{figure*}

\subsection{Constructor verification}
The verification rule for constructors follows \textsc{OConstruct} from our operational semantics: the direct base constructor invocations are verified in order of inheritance, prior to initializing the fields of the object and verifying the command in the constructor's body. Virtual calls are always dispatched to the (sub)object under construction.
\begin{mathpar}
\inferrule
{
\forall \st{o}{C}, \overline{v}.\\\\
P[\overline{v}/\overline{x}] = P_0 \\
\{ P_0 \}\ o_{C_1} \deref C_1(\overline{e_1}[o/\texttt{this},\overline{v}/\overline{x}])\ \{ P_1 \ast \dyn{o_{C_1}}{C_1} \} \\\\
\ldots \\\\
\{ P_{n-1} \}\ o_{C_n} \deref C_n(\overline{e_n}[o/\texttt{this},\overline{v}/\overline{x}])\ \{ P_n \ast \dyn{o_{C_n}}{C_n} \} \\\\
\{ P_n \ast \overline{o \deref f \mapsto \texttt{null}} \ast \dyn{o}{C} \}\ c[o/\texttt{this},\overline{v}/\overline{x}]\ \{ Q[o/\texttt{this}, \overline{v}/\overline{x}] \}
}
{
\Gamma \vdash C(\overline{x})\ \texttt{req}\ P\ \texttt{ens}\ Q : C_1(\overline{e_1}) \ldots C_n(\overline{e_n})\ {\{} c {\}}\ \text{correct in $C$}
}
\end{mathpar}

\subsection{Behavioral subtyping}
We follow Parkinson and Bierman's approach \cite{ParkinsonMatthewJ.2008Slaa} to check whether specifications of overriding methods satisfy behavioral subtyping. A specification $\{ P_D \}\texttt{\_}\{ Q_D \}$ of an overriding method in derived class $D$ \emph{implies} a specification $\{ P_B \}\texttt{\_}\{ Q_B \}$ of a method in base class $B$, if for all commands $c$, values $\overline{v}$ and object pointers $\st{o}{B}$ with a well-defined downcast $o' = o \downarrow D$ that satisfy $\{ P_D[S_D] \}\ c\ \{ Q_D[S_D] \}$, it holds that $\{ P_B[S_B] \}\ c\ \{ Q_B[S_B] \}$ is also satisfied, with $S_B = o/\texttt{this},D/\theta,\overline{v}/\overline{x}$ and $S_D = o'/\texttt{this},D/\theta,\overline{v}/\overline{x}$. This holds when a proof tree exists using the structural rules of Hoare and Separation logic, with leaves $\Gamma \vdash \{ P_D[S_D] \}\texttt{\_}\{ Q_D[S_D] \}$ and root $\Gamma \vdash \{ P_B[S_B] \}\texttt{\_}\{ Q_B[S_B] \}$:
\begin{mathpar}
\inferrule*
{
    \Gamma \vdash \{ P_D[S_D] \}\texttt{\_}\{ Q_D[S_D] \}
}
{
    \inferrule*
    {
        \vdots
    }
    {
        \Gamma \vdash \{ P_B[S_B] \}\texttt{\_}\{ Q_B[S_B] \}
    }
}
\end{mathpar}
We use notation $\Gamma \vdash \{ P_D \}\texttt{\_}\{ Q_D \} \refines{D}{B} \{ P_B \}\texttt{\_}\{ Q_B \}$ to denote that such a proof exists.

\subsection{Method verification}
The verification rule for correctly overriding a method checks that (1) the specification for method $m$ in derived class $C$ satisfies behavioral subtyping for base class $C'$ which also declares $m$, and (2) recursively checks this condition for all direct base classes of $C'$. We use $methods(C)$ to denote all methods declared in class $C$.
\begin{mathpar}
\inferrule
{
\texttt{class}\ C \cdots {\{} \cdots \texttt{virtual}\ m(\overline{x})\ \texttt{req}\ P\ \texttt{ens}\ Q \cdots {\}} \in \Gamma\\
\texttt{class}\ C' \cdots {\{} \cdots \texttt{virtual}\ m(\overline{x})\ \texttt{req}\ P'\ \texttt{ens}\ Q' \cdots {\}} \in \Gamma\\
\Gamma \vdash \{ P \}\texttt{\_}\{ Q \} \refines{C}{C'} \{ P' \}\texttt{\_}\{ Q' \}\\
{
\begin{array}{r}
\forall C'' \in bases(C').\,m \in methods(C'') \Rightarrow\quad\\
\Gamma \vdash \text{override of $m$ in $C''$ correct in C}
\end{array}
}
}
{
\Gamma \vdash \text{override of $m$ in $C'$ correct in $C$}
}
\end{mathpar}
Method $m$ in class $C$ is correct if (1) the override check for all base classes of $C$ that declare $m$ succeeds and (2) the method body satisfies its specification given that the target class type is $C$.
\begin{mathpar}
\inferrule
{
{
\begin{array}{l}
\forall C' \in bases(C).\,m \in methods(C') \Rightarrow\\
\quad\Gamma \vdash \text{override of $m$ in $C'$ correct in C}
\end{array}
}\\\\
\forall \st{o}{C}, \overline{v}.\\\\ \{ P[o/\texttt{this}, C/\theta, \overline{v}/\overline{x}] \}\ c[o/\texttt{this}, \overline{v}/\overline{x}]\ \{ Q[o/\texttt{this}, C/\theta, \overline{v}/\overline{x}] \}
}
{
\Gamma \vdash m(\overline{x})\ \texttt{req}\ P\ \texttt{ens}\ Q\ {\{} c {\}}\ \text{correct in $C$}
}
\end{mathpar}

\subsection{Destructor verification}
The verification rule for correctly overriding a destructor is similar to the verification rule for correctly overriding a method. The difference is that it recursively checks the rule for \emph{all} bases because every class must declare a destructor in our language.

The verification rule for destructors again resembles the operational semantics and follows the reverse process of its corresponding constructor. The command of the body is first verified, followed by the removal of the object's fields and verification of the direct base destructor invocations in reverse order of inheritance. Virtual member invocations are dispatched to the (sub)object under destruction.
\begin{mathpar}
\inferrule
{
\forall C' \in bases(C).\;
\Gamma \vdash \text{override of destructor in $C'$ correct in $C$}\\\\
\forall \st{o}{C}.\\\\
bases(C) = C_1 \ldots C_n \\
P_0 = Q\\
\{ P[o/\texttt{this}, C/\theta] \}\ c[o/\texttt{this}]\ \{ P_n \ast \overline{o \deref f \mapsto \texttt{\_}} \ast \dyn{o}{C} \} \\\\
\{ P_n \ast \dyn{o_{C_n}}{C_n} \}\ o_{C_n} \deref \texttt{\textasciitilde}C_n()\ \{ P_{n-1} \} \\\\
\ldots \\\\
\{ P_1 \ast \dyn{o_{C_1}}{C_1} \}\ o_{C_1} \deref \texttt{\textasciitilde}C_1()\ \{ P_0 \}
}
{
\Gamma \vdash \texttt{\textasciitilde}C()\ \texttt{req}\ P\ \texttt{ens}\ Q\ \{ c \}\ \text{correct in $C$}
}
\end{mathpar}

\subsection{Program verification}
Verification of a class succeeds if verification for its constructor, destructor, and methods succeeds. We additionally require that a derived class overrides all methods declared in its base classes. This requirement renders our assumption sound that the dynamic type of the target object during verification of a destructor or method is the class type of the enclosing class it is declared in.

A program is correct if verification of all its classes succeeds, and its main command is verifiable given an empty heap.
\begin{mathpar}
\inferrule
{
prog = \overline{class}\ c\\
\vdash \overline{class}\ \text{correct}\\
\vdash \{ \texttt{true} \}\ c\ \{ \texttt{true} \}\\
}
{
\vdash \text{program correct}
}
\end{mathpar}

\begin{theorem}[Soundness]
Given that the program is correct, the main command, when executed in the empty heap, does not get stuck (i.e.~it either terminates or diverges):
$$\vdash \mathrm{program\ correct} \land prog = \overline{class}\ c \Rightarrow \emptyset, c \Downarrow \_ \lor \emptyset, c\,\texttt{div}$$
\end{theorem}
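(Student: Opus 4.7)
The plan is standard for soundness of a separation logic: establish a semantic interpretation of the Hoare judgment under which each syntactic rule is admissible, then specialise to the main command. Concretely, I would say that $\models \{P\}\,c\,\{Q\}$ holds when, for every heap $h$ with $I,h \vDash P$ (where $I$ is the canonical predicate interpretation), executing $c$ in $h$ cannot get stuck, and every terminating execution $h, c \Downarrow h'$ satisfies $I, h' \vDash Q$. Analogous semantic judgments are needed for expressions, constructor calls, and destructor calls so that the whole soundness argument can proceed by a single simultaneous induction tracking all four judgment forms.

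First I would fix $I$ as the least fixpoint of the predicate family equations; strict positivity of assertions in predicate bodies makes this fixpoint well defined, and it immediately validates the biconditional \textsc{APredDef}. I would then verify the assertion-weakening lemma by induction on the derivation of $P \Rightarrow_a Q$: the cases \textsc{AImply}, \textsc{AFrame}, \textsc{ATrans} are structural; \textsc{ADyntype} follows from the recursive definition of $dtype$; and \textsc{AMovePred}, \textsc{AMoveCted} follow from the definition of the downcast operator $\downarrow$. Semantic consequence then packages this into a single admissible weakening step usable throughout the rest of the proof.

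The heart of the argument is a simultaneous coinductive (equivalently step-indexed) soundness lemma that handles the mutual recursion between method/constructor/destructor specifications and the rules that invoke them. I would assume as a coinduction hypothesis that every verified method, constructor, and destructor of $prog$ satisfies its declared spec, and then show by induction on the syntactic verification derivation of each command that its triple is semantically sound. The bookkeeping in \textsc{HNew}, \textsc{HConstruct}, \textsc{HDelete}, and \textsc{HDestruct} mirrors that of the operational rules \textsc{ONew}, \textsc{OConstruct}, \textsc{ODelete}, \textsc{ODestruct}, so these cases reduce to reading off the operational definitions and invoking the coinductive hypothesis on the base-class constructor and destructor calls. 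The $\dyn{}{}$ and $\texttt{cted}$ resources threaded through these rules are exactly what is needed to discharge the side condition $dtype(o, C) \subseteq h$ of \textsc{ODynamicDispatch} at every virtual call site.

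The main obstacle is case \textsc{HDynamicDispatch}. Statically the call is verified using the spec of $m$ in the static class $C$ with $\theta := C'$, but operationally the body executed is the override of $m$ in the dynamic class $C'$, verified against the spec of $m$ at $C'$ with $\theta := C'$. Closing this gap requires three ingredients: (i) the program-wide requirement that every derived class overrides every method of its bases, so that such a body is guaranteed to exist in $C'$ and was in fact verified in the class-correctness judgment; (ii) the override-correctness judgment, whose recursive structure transitively propagates the $\refines{}{}$ relation along every inheritance chain between $C'$ and $C$; and (iii) the definition of $\refines{}{}$ itself, which gives exactly the semantic implication between specifications needed to identify the body's own triple with the one demanded by \textsc{HDynamicDispatch} up to downcast. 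Once this case is discharged, the theorem follows by instantiating the simultaneous soundness lemma at the main command with pre- and postcondition $\texttt{true}$ and initial heap $\emptyset$: no execution of $c$ can get stuck, which by the operational definitions means $\emptyset, c \Downarrow \_$ or $\emptyset, c\,\texttt{div}$.
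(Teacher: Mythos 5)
Your proposal follows essentially the same route as the paper: the least-fixpoint predicate interpretation justified by monotonicity of assertion semantics, soundness of $\Rightarrow_a$ by induction on its derivation, semantic counterparts of all four judgment forms, a main soundness lemma combining coinduction with induction on the verification derivation, the \textsc{HDynamicDispatch} case discharged via the every-method-overridden requirement and the behavioral-subtyping relation $\refines{D}{C}$, and finally instantiation at the main command with \texttt{true}/\texttt{true} in the empty heap, where ``not stuck'' is exactly ``terminates or diverges''. The only refinement the paper makes explicit, and that your sketch leaves vague, is how the coinduction is kept productive in the big-step setting: the semantic triples quantify over a frame heap (to validate \textsc{HFrame}) and assert ``diverges or terminates satisfying $Q$'', and the coinduction is on the coinductive divergence judgment---each case first applies an operational divergence rule and then argues contrapositively (the call terminates if the body terminates)---rather than assuming the method, constructor, and destructor specifications outright as you phrase it.
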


\section{Example proof outline}\label{sec:proof_outline}
This section shows an example in our formal language, annotated with its proof outline. It illustrates a \emph{node} class \texttt{N} which inherits from both a \emph{target} class \texttt{T} and \emph{source} class \texttt{S}. A target and source can have a source and target, respectively. A node is initially its own target and source. 

The example illustrates dynamic dispatch during construction and shows that our program logic is applicable in the presence of multiple inheritance. The main command shows how our proof system can handle polymorphic deletion of objects. The proof outline for \texttt{T} is symmetric to the one shown in \texttt{S}, and is therefore omitted. Empty bodies implicitly contain a \texttt{skip} command.

\lstinputlisting[style=proof]{code/node.cpp}
The proof that the specification of \texttt{\textasciitilde N} implies the specification of \texttt{\textasciitilde T}, can be constructed as follows:
\begin{flalign*}
\inferrule*[Right=AMovePred]
{
    \inferrule*[Right=APredDef]
    {
        \inferrule*[Right=APredDef]
        {\color{NavyBlue}
            \{ this \deref \Base{sdyn}{N}(\texttt{N}) \ast this \deref \Base{Sok}{N}() \}
            \texttt{\color{black}\_} 
            \{ \texttt{true} \}
        }
        {\color{NavyBlue}
            \left\{
            {
                \begin{aligned}
                \Base{this}{s} \deref \Base{sdyn}{S}(\texttt{N}) \ast \Base{this}{T} \deref \Base{tdyn}{T}(\texttt{N}) \\
                \ast\,\Base{this}{S} \deref \Base{Sok}{S}() \ast \Base{this}{T} \deref \Base{Tok}{T}()
                \end{aligned}
            }
            \right\}
            \texttt{\color{black}\_} 
            \{ \texttt{true} \}
        }
    }
    {\color{NavyBlue}
        \{ this \deref \Base{tdyn}{N}(\texttt{N}) \ast this \deref \Base{Tok}{N}() \}
        \texttt{\color{black}\_} 
        \{ \texttt{true} \}
    }
}   
{\color{NavyBlue}
    \{ \Base{this}{T} \deref \Base{tdyn}{N}(\texttt{N}) \ast \Base{this}{T} \deref \Base{Tok}{N}() \}
    \texttt{\color{black}\_} 
    \{ \texttt{true} \}
}
\end{flalign*}
The behavioral suptyping proofs for the specifications of \texttt{setSource} and \texttt{setTarget}, and the proof that the specification of \texttt{\textasciitilde N} implies the specification of \texttt{\textasciitilde S}, can be established trivially using assertion weakening rule \textsc{AMovePred}.

\section{Related Work}\label{sec:related-work}
Parkinson and Bierman's work \cite{ParkinsonMatthew2005Slaa,ParkinsonMatthewJ.2008Slaa} introduces abstract predicate families. Their proof system allows a derived class to extend a base class, restrict the behavior of its base class, and alter the behavior of the base class while preserving behavioral subtyping. Method specifications consist of a dynamic and static specification, used for dynamically and statically dispatched calls, respectively. We derive these specifications from the same specification, using special variable $\theta$. Their proof system only accounts for single inheritance without the presence of virtual destructors.

\citet{RamananandroTahina2012AMSf} define operational semantics for a subset of \Cpp{}, including construction and destruction in the presence of multiple inheritance and virtual methods that are dynamically dispatched. Their semantics encode the evolution of an object's dynamic type during construction and destruction. However, they only consider \emph{stack-allocated} objects. This means that the concrete dynamic type of an object is always statically known at the point of its destruction.

\citet{van2009separation} extend the work of Parkinson and Bierman to a separation logic for object-oriented programs with multiple inheritance and virtual methods calls that are dynamically dispatched. They only consider virtual inheritance, which means that an object cannot have two base subobjects of the same class type. Furthermore, their logic does not support destructors, so polymorphic deletion is not considered. In their proof system, the dynamic type of an object is fixed after allocation, whereas we model the evolution of the dynamic type of an object during its construction and destruction.

BRiCk \cite{BRiCk}, built upon the separation logic of Iris \cite{jung2018iris}, is a program logic for \Cpp{}. The Frama-Clang plugin of Frama-C \cite{KirchnerFlorent2015FAsa} enables analysis of \Cpp{} programs, supporting the ACSL specification language. Both tools support dynamic dispatching and model the evolution of an object's dynamic type through its construction and destruction. However, at the time of writing, no literature on these tools’ approaches has appeared.

\section{Conclusion}\label{sec:conclusion}
In this paper we proposed a separation logic for modular verification of programs where virtual method calls are bound to different methods at different points during the construction and destruction of objects. Additionally, we support polymorphic destruction where the static type of an object is a supertype of its dynamic type.

We defined the operational semantics of our language related to allocation and deallocation, construction and destruction, and method dispatching, and listed the corresponding proof rules for verification.

Next, we illustrated an example program annotated with a proof outline, to support our verification approach. This example indicates that our separation logic can be used to verify \Cpp{} dynamic binding in the presence of multiple inheritance. To our knowledge, we are the first to define a Hoare logic which reflects \Cpp{}'s semantics of dynamic binding in the presence of constructors an destructors.

We implemented our approach \cite{verifast-tests} as part of our effort to extend our VeriFast tool for semi-automated modular formal verification of C and Java programs with support for \Cpp{}. The implementation in VeriFast additionally supports bases that are non-polymorphic. One limitation is that our current operational semantics and separation logic does not consider virtual inheritance.

\bibliographystyle{ACM-Reference-Format}
\bibliography{bibl}

\appendix
\section{Operational semantics}
The operational semantics of expressions, commands, and constructor and destructor invocations are defined by mutual induction:
\begin{mathparpagebreakable}
\inferrule[OLookup]
{
h, e \Downarrow  h' \uplus \lbbar o \deref f \mapsto v \rbbar, o
}
{
h, e \deref f \Downarrow  h' \uplus \lbbar o \deref f \mapsto v \rbbar, v
}
\and
\inferrule[ODeleteNull]
{
h, e \Downarrow  h', \texttt{null}
}
{
h, \texttt{delete}(e) \Downarrow  h'
}
\and
\inferrule[OVal]
{}
{
h, v \Downarrow h, v
}
\and
\inferrule[OUpdate]
{
h, e \Downarrow  h', o \\
 h', e' \Downarrow h'' \uplus \lbbar o \deref f \mapsto v \rbbar, v'
}
{
h, e \deref f \coloneqq e' \Downarrow h'' \uplus \lbbar o \deref f \mapsto v' \rbbar
}
\and
\inferrule[OLet]
{
h, e \Downarrow  h', v \\
 h', c[v/x] \Downarrow h''
}
{
h, \texttt{let}\ x \coloneqq e\ \texttt{in}\ c \Downarrow h''
}
\and
\inferrule[OSeq]
{
h, c \Downarrow h'\\
h', c' \Downarrow h''
}
{
h, c; c' \Downarrow h''
}
\and
\inferrule[OSkip]
{}
{
h, \texttt{skip} \Downarrow h
}
\and
\inferrule[OUpcast]
{
h, e \Downarrow  h', o \\
\st{o}{C} \\
C' \in bases(C) \\
}
{
h, ({C'*})\ e \Downarrow  h', o_{C'}
}
\and
\inferrule[OStaticDispatch]
{
\texttt{class}\ C \cdots {\{} \cdots \texttt{virtual}\ m(\overline{x}) {\{} c {\}} \cdots {\}} \\
h, e \Downarrow  h', o \\
\st{o}{C} \\
 h', \overline{e} \Downarrow h'', \overline{v} \\
h'', c[o/\texttt{this},\overline{v}/\overline{x}] \Downarrow h'''
}
{
h, e \deref C\text{::}m(\overline{e}) \Downarrow h'''
}
\and
\inferrule[ODynamicDispatch]
{
\texttt{class}\ C \cdots {\{} \cdots \texttt{virtual}\ m(\overline{x}) {\{} c {\}} \cdots {\}} \\
h, e \Downarrow  h', o \\
 h', \overline{e} \Downarrow h'', \overline{v} \\
dtype(o, C) \subseteq h'' \\
o' = o \downarrow C \\
h'', c[o'/\texttt{this},\overline{v}/\overline{x}] \Downarrow h'''
}
{
h, e \deref m(\overline{e}) \Downarrow h'''
}
\and
\inferrule[OConstruct]
{
\texttt{class}\ C : C_1 \ldots C_n {\{} \overline{f \coloneqq \texttt{null};} \cdots C(\overline{x}) : C_1(\overline{e_1}) \ldots C_n(\overline{e_n}) {\{} c {\}} \cdots {\}} \\
h, \overline{e} \Downarrow h_0, \overline{v} \\\\
h_0, o_{C_1} \deref C_1(\overline{e_1}[o/\texttt{this},\overline{v}/\overline{x}]) \Downarrow h_1 \uplus dtype(o_{C_1}, C_1) \\\\
\vdots \\\\
h_{n-1}, o_{C_n} \deref C_n(\overline{e_n}[o/\texttt{this},\overline{v}/\overline{x}]) \Downarrow h_n \uplus dtype(o_{C_n}, C_n) \\\\
h_n \uplus \lbbar \overline{o \deref f \mapsto \texttt{null}} \rbbar \uplus dtype(o, C), c[o/\texttt{this},\overline{v}/\overline{x}] \Downarrow  h'
}
{
h, o \deref C(\overline{e}) \Downarrow  h'
}
\and
\inferrule[ONew]
{
o = (id: C{*}) \\
\texttt{alloc}(id) \notin h \\
h \uplus \lbbar \texttt{alloc}(id) \rbbar, o \deref C(\overline{e}) \Downarrow  h'
}
{
h, \texttt{new}\ C(\overline{e}) \Downarrow  h' \uplus \lbbar \texttt{cted}(o, C) \rbbar, o
}
\and
\inferrule[ODestruct]
{
\texttt{class}\ C : C_1 \ldots C_n {\{} \overline{f \coloneqq \texttt{null};} \cdots \texttt{virtual}\ \texttt{\textasciitilde}C() {\{} c {\}} \cdots {\}} \\
h, c[o/\texttt{this}] \Downarrow h_n \uplus dtype(o, C) \uplus \lbbar \overline{o \deref f \mapsto v} \rbbar \\\\
h_n \uplus dtype(o_{C_n}, C_n), o_{C_n} \deref \texttt{\textasciitilde}C_n() \Downarrow h_{n-1} \\\\
\vdots \\\\
h_1 \uplus dtype(o_{C_1}, C_1), o_{C_1} \deref \texttt{\textasciitilde}C_1() \Downarrow h_0
}
{
h, o \deref \texttt{\textasciitilde}C() \Downarrow h_0
}
\and
\inferrule[ODelete]
{
o' = o \downarrow C \\
h, e \Downarrow  h'  \uplus \lbbar \texttt{cted}(o', C) \rbbar, o \\
h', o' \deref \texttt{\textasciitilde}C() \Downarrow h''
}
{
h, \texttt{delete}(e) \Downarrow h''
}
\end{mathparpagebreakable}

\section{Assertion semantics}
The semantics of assertions are defined as follows:
\begin{equation*}
\begin{array}{l l l}
I,h \vDash b & \Leftrightarrow & b = \texttt{true} \\
I,h \vDash P \ast Q & \Leftrightarrow & 
\begin{array}{@{}l@{}}
\exists h_1,h_2. h = h_1 \uplus h_2 {} \wedge {} \\
\quad I,h_1 \vDash P \wedge I,h_2 \vDash Q 
\end{array}
\\
I,h \vDash P \wedge Q & \Leftrightarrow & I,h \vDash P \wedge I,h \vDash Q \\
I,h \vDash P \vee Q & \Leftrightarrow & I,h \vDash P \vee I,h \vDash Q \\
I,h \vDash \exists x.\ P & \Leftrightarrow & \exists \nu.\ I,h \vDash P[\nu/x] \\
I,h \vDash o \deref p(C,\overline{\nu}) & \Leftrightarrow & \exists o'.\;o \downarrow C = o' \land (h, o', p, C, \overline{\nu}) \in I \\
I,h \vDash \texttt{cted}(o, C) & \Leftrightarrow & \exists o'.\;o \downarrow C = o' \land \cted{o'}{C} \in h \\
I,h \vDash \dyn{o}{C} & \Leftrightarrow & dtype(o, C) \subseteq h \\
I,h \vDash o \deref f \mapsto v & \Leftrightarrow & o \deref f \mapsto v \in h
\end{array}
\end{equation*}
where $I,h \vDash P$ means that assertion $P$ is satisfied, given heap $h$ and interpretation of predicates $I$. Cases not listed are false.

\section{Proof rules}
We define evaluation contexts for expressions and commands as follows:
\begin{align*}
K_e & \Coloneqq \bullet \mid K_e \deref f \mid \texttt{new}\ C(\overline{v}\ K_e\ \overline{e}) \mid (C{*}) K_e
\\
K_c & \Coloneqq \bullet \mid \texttt{delete}\ K_e \mid K_e \deref f \coloneqq e \mid o \deref f \coloneqq K_e \mid K_e \deref C{::}m(\overline{e}) \\
&\mid o \deref C{::}m(\overline{v}\ K_e\ \overline{e}) \mid K_e \deref m(\overline{e}) \mid o \deref m(\overline{v}\ K_e\ \overline{e})
\end{align*}
We use the notation $K[e]$ to denote the context $K$ with expression $e$ substituted for the hole $\bullet$.

\begin{mathparpagebreakable}
\inferrule[HFrame]
{
\{P\}\ c\ \{Q\}
}
{
\{P \ast R\}\ c\ \{Q \ast R\}
}
\and
\inferrule[HConseq]
{
P \Rightarrow_a P' \\
\{ P' \}\ c\ \{ Q' \} \\
Q' \Rightarrow_a Q
}
{
\{ P \}\ c\ \{ Q \}
}
\and
\inferrule[HNull]
{}
{\{ \true{} \}}\ \nul{}\ \{ \texttt{result} = \nul{} \}
\and
\inferrule[HPointer]
{}
{\{ \true{} \}\ o\ \{ \texttt{result} = o \}}
\and
\inferrule[HLookup]
{}
{
\{ o \deref f \mapsto v \}\ o \deref f\ \{ o \deref f \mapsto v \wedge \texttt{result} = v \}
}
\and
\inferrule[HUpdate]
{}
{
\{ o \deref f \mapsto \texttt{\_} \}\ o \deref f \coloneqq v\ \{ o \deref f \mapsto v \}
}
\and
\inferrule[HLet]
{
\{ P \}\ e\ \{ Q \}\\
\forall v.\ \{ Q[v/\texttt{result}] \}\ c[v/x]\ \{ R \}
}
{
\{ P \}\ \texttt{let}\ x \coloneqq e\ \texttt{in}\ c\ \{ R \}
}
\and
\inferrule[HSeq]
{
\{ P \}\ c\ \{ Q \}\\
\{ Q \}\ c'\ \{ R \}
}
{
\{ P \}\ c; c'\ \{ R \}
}
\and
\inferrule[HSkip]
{}
{
\{ P \}\ \texttt{skip}\ \{ P \}
}
\and
\inferrule[HContext]
{
\{ P \}\ e\ \{ Q \}\\
\forall v.\ \{ Q[v/\texttt{result}] \}\ K[v]\ \{ R \}
}
{
\{ P \}\ K[e]\ \{ R \}
}
\and
\inferrule[HConsContext]
{
\{ P \}\ e\ \{ Q \}\\
\forall v.\ \{ Q[v/\texttt{result}] \}\ o \deref C(\overline{v}\ v\ \overline{e})\ \{ R \}
}
{
\{ P \}\ o \deref C(\overline{v}\ e\ \overline{e})\ \{ R \}
}
\and
\inferrule[HConstruct]
{
\texttt{class}\ C \cdots {\{} \cdots C(\overline{x})\ \texttt{req}\ P\ \texttt{ens}\ Q \cdots {\}} \in \Gamma
}
{
\{ P[\overline{v}/\overline{x}]) \}\ o\deref C(\overline{v})\ \{ Q[\overline{v}/\overline{x}, o/\texttt{this}] \}
}
\and
\inferrule[HNew]
{
\texttt{class}\ C \cdots {\{} \cdots C(\overline{x})\ \texttt{req}\ P\ \texttt{ens}\ Q \cdots {\}} \in \Gamma
}
{
\{ P[\overline{v}/\overline{x}] \}\ \texttt{new}\ C(\overline{v})\ \{ Q[\overline{v}/\overline{x}, \texttt{result}/\texttt{this}] \ast \texttt{cted}(\texttt{result}, C) \}
}
\and
\inferrule[HDestruct]
{
\texttt{class}\ C \cdots {\{} \cdots \texttt{virtual \textasciitilde} C()\ \texttt{req}\ P\ \texttt{ens}\ Q \cdots {\}} \in \Gamma
}
{
\{ P[o/\texttt{this}, C/\theta] \}\ o \deref \texttt{\textasciitilde}C()\ \{ Q \}
}
\and
\inferrule[HDeleteNull]
{}
{
\{\true\}\ \texttt{delete}(\texttt{null})\ \{\true\}
}
\and
\inferrule[HDelete]
{
\texttt{class}\ C \cdots {\{} \cdots \texttt{virtual \textasciitilde} C()\ \texttt{req}\ P\ \texttt{ens}\ Q \cdots {\}} \in \Gamma \\
\st{o}{C}
}
{
\{ \texttt{cted}(o, C') \ast P[o/\texttt{this}, C'/\theta] \}\ \texttt{delete}(o)\ \{ Q \}
}
\and
\inferrule[HStaticDispatch]
{
\texttt{class}\ C \cdots {\{} \cdots \texttt{virtual}\ m(\overline{x})\ \texttt{req}\ P\ \texttt{ens}\ Q \cdots {\}} \in \Gamma \\
\st{o}{C}
}
{
\{ P[o/\texttt{this},C/\theta,\overline{v}/\overline{x}]\}\ o \deref C\text{::}m(\overline{v})\ \{ Q[o/\texttt{this}, C/\theta, \overline{v}/\overline{x}] \}
}
\and
\inferrule[HDynamicDispatch]
{
\texttt{class}\ C \cdots {\{} \cdots \texttt{virtual}\ m(\overline{x})\ \texttt{req}\ P\ \texttt{ens}\ Q \cdots {\}} \in \Gamma \\
\st{o}{C}
}
{
\{ \dyn{o}{C'} \wedge P[o/\texttt{this}, C'/\theta, \overline{v}/\overline{x}] \}\ o \deref m(\overline{v})\ \{ Q[o/\texttt{this}, C'/\theta, \overline{v}/\overline{x}] \}
}
\and
\inferrule[HExists]
{
\forall v.\ \{ P[v/x] \}\ c\ \{ Q \}
}
{
\{ \exists x.\ P \}\ c\ \{ Q \}
}
\and
\inferrule[HUpcast]
{
\st{o}{C}\\
C \in bases(C')
}
{
\{ P[o_C/\texttt{result}] \}\ (C{*})\ o\ \{ P \}
}
\end{mathparpagebreakable}

\subsection{Destructor override check}
\begin{mathpar}
\inferrule
{
\texttt{class}\ C \cdots {\{} \cdots \texttt{virtual}\ \texttt{\textasciitilde}C()\ \texttt{req}\ P\ \texttt{ens}\ Q \cdots {\}} \in \Gamma\\
\texttt{class}\ C' \cdots {\{} \cdots \texttt{virtual}\ \texttt{\textasciitilde}C'()\ \texttt{req}\ P'\ \texttt{ens}\ Q' \cdots {\}} \in \Gamma\\
\Gamma \vdash \{ P \}\texttt{\_}\{ Q \} \refines{C}{C'} \{ P' \}\texttt{\_}\{ Q' \}\\
{
\begin{array}{r}
\forall C'' \in bases(C').\,\Gamma \vdash \text{override of destructor in $C''$ correct in C}
\end{array}
}
}
{
\Gamma \vdash \text{override of destructor in $C'$ correct in $C$}
}
\end{mathpar}

\subsection{Class verification}
\begin{mathpar}
\inferrule
{class = \texttt{class}\ C \cdots {\{} \cdots ctor\ dtor\ \overline{meth}\ {\}}\\
\Gamma \vdash ctor\ \text{correct in $C$}\\
\Gamma \vdash dtor\ \text{correct in $C$}\\
\Gamma \vdash \overline{meth}\ \text{correct in $C$}
}
{\Gamma \vdash class\ \text{correct}}
\end{mathpar}

\section{Soundness}

%\begin{definition}
%We say a heap is \emph{valid} if, whenever it contains a resource $\dyn{o}{C}$, then $\exists o'.\;o' = o \downarrow C$ and furthermore $C$ is a class of the program.
%\end{definition}

Due to the fact that our assertion language does not allow predicate assertions in negative positions (i.e.~under negation or on the left-hand side of implication), we have the following property:
\begin{lemma}\label{lem:asn-mono}
The semantics of assertions is monotonic in the predicate interpretation $I$:
$$I \subseteq I' \land I, h \vDash P \Rightarrow I', h \vDash P$$
\end{lemma}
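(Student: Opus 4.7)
The plan is to proceed by straightforward structural induction on the assertion $P$, using the semantics listed in the appendix. The key design decision that the statement itself highlights is that the assertion grammar contains no negation and no implication, so every occurrence of a predicate assertion sits in a positive position; enlarging the interpretation can therefore only add satisfying pairs, never remove them.

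First I would dispatch the cases whose semantics do not mention $I$ at all: $\true$ and $\texttt{false}$, the points-to assertion $\varepsilon \deref f \mapsto \varepsilon$, $\texttt{cted}(\varepsilon,\varepsilon)$, and $\dyn{\varepsilon}{\varepsilon}$. For each of these the satisfaction relation is determined entirely by $h$ (membership in $h$, or $dtype(o,C)\subseteq h$), so $I, h \vDash P$ and $I', h \vDash P$ are literally the same statement.

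For the connectives I would invoke the induction hypothesis on each subformula, reusing the same witnesses. The cases $P \wedge Q$ and $P \vee Q$ are immediate. For $P \ast Q$, I take the same heap split $h = h_1 \uplus h_2$ that witnesses $I, h \vDash P \ast Q$ and apply the inductive hypothesis to each side, obtaining $I', h_1 \vDash P$ and $I', h_2 \vDash Q$, hence $I', h \vDash P \ast Q$. For $\exists x.\, P$, I reuse the same witnessing value $\nu$ and apply the induction hypothesis to $P[\nu/x]$.

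The only case that truly consumes the hypothesis $I \subseteq I'$ is the predicate assertion $\varepsilon \deref p_{\varepsilon'}(\overline{\varepsilon''})$: if $I, h \vDash o \deref p_C(\overline{\nu})$ then there exists $o'$ with $o \downarrow C = o'$ and $(h, o', p, C, \overline{\nu}) \in I$, which by $I \subseteq I'$ also lies in $I'$, giving $I', h \vDash o \deref p_C(\overline{\nu})$. I do not anticipate any real obstacle; the lemma is essentially a well-formedness check ensuring that the operator whose least fixpoint defines the predicate interpretation is genuinely monotone, and the syntactic absence of negative predicate occurrences is exactly what makes the induction go through mechanically.
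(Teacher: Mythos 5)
Your proposal is correct and matches the paper's proof, which is exactly a structural induction on $P$ (the paper leaves the case analysis implicit); your elaboration of the predicate-assertion case as the only one consuming $I \subseteq I'$ is the right observation and needs no changes.
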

\begin{proof}
By induction on the structure of $P$.
\end{proof}

We define a function $F$ on predicate interpretations as follows:
\begin{equation*}
F(I) = \left\{
(h, o, p, C, \overline{v})\ \middle|\ 
\begin{array}{@{}l@{}}
\texttt{class}\ C\ \cdots\ \{\ \cdots\ \texttt{pred}\ p(\overline{x}) = P;\ \cdots\ \}\\
\land\, I, h \vDash P[o/\texttt{this}, \overline{v}/\overline{x}]
\end{array}
\right\}
\end{equation*}

We define the program's predicate interpretation $I_\text{program}$ by
$I_\text{program} = \bigcap\{I\;|\;F(I) \subseteq I\}$.
By the Knaster-Tarski theorem, $I_\text{program}$ is a fixpoint of $F$: $F(I_\text{program}) = I_\text{program}$.\footnote{It is in fact the least fixpoint.} We use notation $h \vDash P$ to mean $I_\text{program}, h \vDash P$.

\begin{lemma}[Soundness of assertion weakening]
$$P \Rightarrow_a Q \land h \vDash P \Rightarrow h \vDash Q$$
\end{lemma}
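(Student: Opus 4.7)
The natural approach is induction on the derivation of $P \Rightarrow_a Q$, with case analysis on the last rule applied. Throughout, I abbreviate $I_\text{program}, h \vDash P$ as $h \vDash P$. Several rules are stated bidirectionally (e.g., \textsc{ADyntype}, \textsc{AMoveCted}, \textsc{AMovePred}, \textsc{APredDef}); each such rule contributes two cases to the induction, one per direction.

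The easy cases: \textsc{AImply} is immediate by instantiating its hypothesis at $I = I_\text{program}$. \textsc{ATrans} follows by composing the two induction hypotheses. \textsc{AFrame} uses the semantics of $\ast$: decompose $h = h_1 \uplus h_2$ with $h_1 \vDash P$ and $h_2 \vDash Q$, apply the IH to get $h_1 \vDash P'$, and recombine. For \textsc{ADyntype}, I would prove a small structural lemma that when $\st{o}{C}$ and $bases(C) = C_1 \ldots C_n$ with $n > 0$, the sets $dtype(o_{C_i}, C')$ are pairwise disjoint and their union is exactly $dtype(o, C')$; the equivalence of $\dyn{o}{C'}$ and the separating conjunction of the base dynamic-type assertions then reduces to this set-theoretic identity.

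For \textsc{AMoveCted} and \textsc{AMovePred}, the key observation is a lemma about the downcast operation: if $\st{o}{C}$, $C' \in bases(C)$, and $C' \neq C''$, then $o \downarrow C''$ is defined iff $o_{C'} \downarrow C''$ is defined, and in that case they yield the same object pointer. This is a short induction on the shape of $o$ using the two $\downarrow$-inference rules (the side-condition $C' \neq C''$ is what prevents the $\downarrow$ rule with $o_{C'}$'s own static type from interfering). Given this lemma, both assertion semantics (which are stated in terms of $\exists o'. o \downarrow C'' = o' \land \cdots$) coincide on the two sides.

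The main obstacle is \textsc{APredDef}, which requires genuine use of the fixpoint construction of $I_\text{program}$. For the forward direction, from $h \vDash o \deref p_C(\overline{\nu})$ and $\st{o}{C}$ we get $o \downarrow C = o$ and $(h, o, p, C, \overline{\nu}) \in I_\text{program}$. Applying the fixpoint identity $I_\text{program} = F(I_\text{program})$ in one direction gives $(h, o, p, C, \overline{\nu}) \in F(I_\text{program})$, and unfolding $F$ yields exactly $I_\text{program}, h \vDash P[o/\texttt{this},\overline{\nu}/\overline{x}]$, which is the conclusion. The backward direction folds using the same equation, reading $F(I_\text{program}) \subseteq I_\text{program}$. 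The only subtlety is that the body $P$ of the predicate definition may itself mention predicate assertions, but this does not matter here because we are merely relating $\vDash$ of the body under $I_\text{program}$ to membership in $F(I_\text{program})$, which is exactly how $F$ is defined; Lemma~\ref{lem:asn-mono} is not needed at this step. The fact that $I_\text{program}$ is a fixpoint (guaranteed by Knaster--Tarski, and mentioned in the text) is the one nontrivial ingredient making \textsc{APredDef} sound.
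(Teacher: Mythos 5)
Your proof takes exactly the paper's route: the paper's own proof is simply ``by induction on the derivation of $P \Rightarrow_a Q$'', and your case analysis (including the fixpoint equation $F(I_\text{program}) = I_\text{program}$ for \textsc{APredDef}, the downcast lemma for \textsc{AMoveCted}/\textsc{AMovePred}, and the $dtype$ decomposition for \textsc{ADyntype}) is a correct elaboration of that induction. No gaps relative to the paper's argument.
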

\begin{proof}
By induction on the derivation of $P \Rightarrow_a Q$.
\end{proof}

We define semantic counterparts of the correctness judgments of our proof system as follows:
\begin{equation*}
\begin{array}{l}
\vDash \{P\}\ e\ \{Q\} \Leftrightarrow\\
\quad \left(\forall h, h_\text{f}.\;h\vDash P \Rightarrow \begin{array}{l}
h \uplus h_\text{f}, e\;\texttt{div} \lor {}\\
\begin{array}{@{}r@{}}
\exists h', v.\;h \uplus h_\text{f}, e \Downarrow h' \uplus h_\text{f}, v \\
\land\, h' \vDash Q[v/\texttt{result}]
\end{array}
\end{array}\right)\\
\\
\vDash \{P\}\ c\ \{Q\} \Leftrightarrow\\
\quad \left(\forall h, h_\text{f}.\;h\vDash P \Rightarrow \begin{array}{l}
h \uplus h_\text{f}, c\;\texttt{div} \lor {}\\
\exists h'.\;h \uplus h_\text{f}, c \Downarrow h' \uplus h_\text{f} \land h' \vDash Q
\end{array}\right)\\
\\
\vDash \{P\}\ o\deref C(\overline{e})\ \{Q\}\Leftrightarrow\\
\quad \left(\forall h, h_\text{f}.\;h\vDash P \Rightarrow \begin{array}{l}
h \uplus h_\text{f}, o\deref C(\overline{e})\;\texttt{div} \lor {}\\
\exists h'.\;h \uplus h_\text{f}, o\deref C(\overline{e}) \Downarrow h' \uplus h_\text{f} \land h' \vDash Q
\end{array}\right)\\
\\
\vDash \{P\}\ o\deref \texttt{\textasciitilde}C()\ \{Q\}\Leftrightarrow\\
\quad \left(\forall h, h_\text{f}.\;h\vDash P \Rightarrow \begin{array}{l}
h \uplus h_\text{f}, o\deref \texttt{\textasciitilde}C()\;\texttt{div} \lor {}\\
\exists h'.\;h \uplus h_\text{f}, o\deref \texttt{\textasciitilde}C() \Downarrow h' \uplus h_\text{f} \land h' \vDash Q
\end{array}\right)
\end{array}
\end{equation*}

\begin{lemma}{Soundness of \textsc{HContext}}\label{lemma:hcontext-soundness}
If $\vDash \{P\}\ e\ \{Q\}$ and $\forall v.\;\vDash \{Q[v/\texttt{result}]\}\ K[v]\ \{R\}$ then
$\vDash \{P\}\ K[e]\ \{R\}$.
\end{lemma}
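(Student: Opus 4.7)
The plan is to decompose the evaluation of $K[e]$ via an auxiliary \emph{composition lemma} for evaluation contexts: for any $K$, expression $e$, and heap $h$, we have $h, K[e] \Downarrow h''$ (together with a result value when $K$ is an expression context) iff there exist $h'$ and $v$ such that $h, e \Downarrow h', v$ and $h', K[v] \Downarrow h''$, and symmetrically $h, K[e]\,\texttt{div}$ iff either $h, e\,\texttt{div}$, or $h, e \Downarrow h', v$ and $h', K[v]\,\texttt{div}$. I would prove this by straightforward induction on the structure of $K$. The base case $K = \bullet$ is immediate from OVal. Each inductive case — for instance $K_c = K_e' \deref f \coloneqq e'$ — proceeds by inverting the unique operational rule that applies at the outermost level (OUpdate in this example), applying the IH to $K_e'$, and reassembling with the same rule. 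The divergence half is proved analogously, by inversion and reassembly of the (coinductively derived) divergence rules.

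Given the composition lemma, the main argument is short. Fix $h$ and $h_\text{f}$ with $h \vDash P$. Applying the first hypothesis yields either $h \uplus h_\text{f}, e\,\texttt{div}$ — in which case divergence lifts through the composition lemma to give $h \uplus h_\text{f}, K[e]\,\texttt{div}$ — or $h \uplus h_\text{f}, e \Downarrow h' \uplus h_\text{f}, v$ with $h' \vDash Q[v/\texttt{result}]$. In the latter case, instantiate the second hypothesis at this $v$ using heap $h'$ and the same frame $h_\text{f}$: we obtain either $h' \uplus h_\text{f}, K[v]\,\texttt{div}$, which composes with the evaluation of $e$ to yield $h \uplus h_\text{f}, K[e]\,\texttt{div}$, or $h' \uplus h_\text{f}, K[v] \Downarrow h'' \uplus h_\text{f}$ with $h'' \vDash R$, which composes into $h \uplus h_\text{f}, K[e] \Downarrow h'' \uplus h_\text{f}$, establishing the required postcondition.

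The main obstacle is the composition lemma itself, which is routine but requires a case analysis over every context form and, for divergence, relies on the (elided) divergence rules having exactly the expected shape — namely, that they preserve the same left-to-right subevaluation order as the termination rules and admit analogous inversion. Once that is granted (as ensured by Dagnino's mechanical derivation), the main lemma follows without further subtlety; in particular, no reasoning about the predicate interpretation $I_\text{program}$ or about assertion weakening is needed here, since the hypotheses already carry the requisite assertions through.
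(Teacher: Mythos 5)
Your high-level plan (glue the two semantic triples together, reusing the same frame $h_\text{f}$, after decomposing the run of $K[e]$ into a run of $e$ followed by a run of $K[v]$) is the right shape and matches the paper's induction on $K$ in spirit, and the final gluing paragraph is fine. The gap is in the composition lemma itself, in exactly one family of contexts: $K = \texttt{new}\ C(\overline{v}\ \bullet\ \overline{e})$. By \textsc{ONew} the chunk $\texttt{alloc}(id)$ is produced \emph{before} the constructor call, and it is \textsc{OConstruct} that evaluates the argument list; so in a run of $K[e]$ the hole expression is evaluated in $h \uplus \lbbar \texttt{alloc}(id) \rbbar$, not in $h$. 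Consequently your ``iff with the same heaps'' is not provable by ``inverting the unique outermost rule and reassembling'': you must invert through two rules (\textsc{ONew} and \textsc{OConstruct}), transfer the freshness side condition (using that $\texttt{alloc}$ chunks are never consumed), and \emph{replay} the run of $e$ under the extra $\texttt{alloc}(id)$ chunk, i.e.\ you need an operational frame/replay lemma for both $\Downarrow$ and $\texttt{div}$ that you never state. Worse, under the main-text version of \textsc{ONew}, which deterministically picks the minimal free identifier, your lemma is simply false: take $e = \texttt{new}\ D()$ and $K = \texttt{new}\ C(\bullet)$ in the empty heap; the composed run assigns identifier $0$ to the $C$ allocation and $1$ to the $D$ allocation, while your decomposed runs assign them the other way around, so no common $h''$ (or result value) exists. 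The same issue infects the divergence half: from $h \uplus h_\text{f}, e\;\texttt{div}$ you need divergence of $e$ under the additional $\texttt{alloc}(id)$ chunk before you may conclude that $\texttt{new}\ C(\overline{v}\ e\ \overline{e})$ diverges.

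The repair is either to weaken your auxiliary lemma (state it only for the nondeterministic \textsc{ONew} of the appendix, with an existentially chosen allocation identifier, and prove the frame/replay lemma it needs; under the deterministic rule you would have to work up to a renaming of allocation identifiers), or---more economically, and closer to what the paper's one-line proof by induction on $K$ suggests---to dispense with a raw operational composition lemma and do the induction directly on the semantic triples $\vDash$. There the universal quantification over the frame $h_\text{f}$ built into $\vDash \{P\}\ e\ \{Q\}$ does the work your missing frame lemma would do: in the $\texttt{new}$-context case you instantiate the frame with $h_\text{f} \uplus \lbbar \texttt{alloc}(id) \rbbar$ to obtain the run of $e$ in the heap it is actually evaluated in, and then splice that into the inverted \textsc{ONew}/\textsc{OConstruct} derivation obtained from the hypothesis on $K[v]$. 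So your closing remark that the main lemma needs no further subtlety is right only once this case is handled; as written, the ``routine'' induction hides precisely the step where it can fail.
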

\begin{proof}
By induction on the structure of $K$.
\end{proof}

\begin{assumption}
The program is correct:
$$\vdash \text{program correct}$$
\end{assumption}

\begin{lemma}[Main Soundness Lemma]
\begin{equation*}
\begin{array}{l}
\forall h, h_\text{f}, P, Q.\; h \vDash P \Rightarrow\\
\quad (\forall e.\;\{P\}\ e\ \{Q\} {} \land {} \\
\quad \quad (\nexists h', v.\;h\uplus h_\text{f}, e \Downarrow h' \uplus h_\text{f}, v \land h' \vDash Q[v/\texttt{result}]) \Rightarrow \\
\quad \quad \quad h \uplus h_\text{f}, e\;\texttt{div}) {} \land {}\\
\quad (\forall c.\;\{P\}\ c\ \{Q\} \land (\nexists h'.\;h\uplus h_\text{f}, c \Downarrow h' \uplus h_\text{f} \land h' \vDash Q) \Rightarrow\\
\quad \quad h \uplus h_\text{f}, c\;\texttt{div}) {} \land {}\\
\quad (\forall o, C, \overline{e}.\;\{P\}\ o\deref C(\overline{e})\ \{Q\} {} \land {} \\
\quad \quad (\nexists h'.\;h\uplus h_\text{f}, o\deref C(\overline{e}) \Downarrow h' \uplus h_\text{f} \land h' \vDash Q) \Rightarrow \\
\quad \quad \quad h \uplus h_\text{f}, o\deref C(\overline{e})\;\texttt{div}) {} \land {}\\
\quad (\forall o, C.\;\{P\}\ o\deref \texttt{\textasciitilde}C()\ \{Q\} {} \land {} \\ 
\quad \quad (\nexists h'.\;h\uplus h_\text{f}, o\deref C(\overline{e}) \Downarrow h' \uplus h_\text{f} \land h' \vDash Q) \Rightarrow \\
\quad \quad \quad h \uplus h_\text{f}, o\deref \texttt{\textasciitilde}C()\;\texttt{div})
\end{array}
\end{equation*}
\end{lemma}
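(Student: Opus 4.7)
The plan is to prove the four conjuncts by simultaneous induction on the derivations of the correctness judgments, with divergence established coinductively. Contrapositively, given a verified term whose evaluation does not terminate with a post-state satisfying $Q$, we construct an infinite execution: at each step, the verification rule that introduced the triple guarantees that the resources in $h$ (since $h \vDash P$) suffice for the matching operational rule to fire, reducing the problem either to a strictly smaller correctness derivation (where the inductive hypothesis applies) or to a genuine big-step sub-evaluation underwriting the coinductive divergence obligation.

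For axiomatic rules (\textsc{HNull}, \textsc{HPointer}, \textsc{HLookup}, \textsc{HUpdate}, \textsc{HSkip}, \textsc{HUpcast}), the precondition directly supplies the resources demanded by the matching operational rule, e.g., $o \deref f \mapsto v \in h$ for \textsc{OLookup}, or a $\st{o}{C}$ witness for \textsc{OUpcast}. For \textsc{HStaticDispatch}, \textsc{HDynamicDispatch}, and \textsc{HDestruct}, we invoke the program-correctness assumption to obtain a correctness derivation for the dispatched body; the $\theta$-substitution convention aligns the call-site specification (instantiated with the body's class for $\theta$) with how the body was verified. For \textsc{HDynamicDispatch}, the $\dyn{o}{C'}$ conjunct of the precondition yields $dtype(o, C') \subseteq h$, which is exactly the enabling side-condition of \textsc{ODynamicDispatch}. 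Structural rules (\textsc{HFrame}, \textsc{HConseq}, \textsc{HExists}) follow from the monotonicity lemma (Lemma~\ref{lem:asn-mono}) and soundness of assertion weakening (Lemma~2), absorbing any extra $R$-satisfying heap into $h_\text{f}$ for \textsc{HFrame}. Sequencing rules (\textsc{HSeq}, \textsc{HLet}) thread two applications of the inductive hypothesis through the intermediate post-condition; \textsc{HContext} reduces to Lemma~\ref{lemma:hcontext-soundness}.

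The most intricate cases are \textsc{HConstruct} and \textsc{HDestruct}, together with their allocation-wrapping counterparts \textsc{HNew} and \textsc{HDelete}. The constructor-verification rule supplies a chain of Hoare triples, one per base constructor call, whose post-conditions contain precisely the $\dyn{o_{C_i}}{C_i}$ resources consumed by \textsc{OConstruct} after each call; we walk through this chain one call at a time by repeated application of the inductive hypothesis, then handle the body with the dynamic-type resource $dtype(o,C)$ produced after all bases have finished. \textsc{HDestruct} is symmetric. The main obstacle, however, is the semantic soundness of the behavioral-subtyping relation $\refines{D}{B}$: when \textsc{HDynamicDispatch} at static type $B$ resolves at runtime to the override in a derived class $D$, the derived body has been verified against its own specification, and we must transport the execution proof along the refinement proof tree from the derived specification to the base one. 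This requires an auxiliary lemma stating that every structural rule admissible in a refinement derivation (\textsc{HFrame}, \textsc{HConseq}, \textsc{HExists}) is semantically sound when applied under an arbitrary command placeholder $\_$ — effectively a parameterized replay of the structural fragment of the main soundness argument, and the deepest interlocking piece of the whole proof.
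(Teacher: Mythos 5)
Your proposal follows the paper's proof in structure: mutual coinduction on the divergence judgments with a nested induction on the derivation of the correctness judgment, the contraposition step (prove the call terminates assuming the body terminates, rather than proving the body diverges directly), discharging \textsc{HContext} via Lemma~\ref{lemma:hcontext-soundness}, and walking the chains of triples in the constructor and destructor verification rules against \textsc{OConstruct}/\textsc{ODestruct}. The one point where you depart from the paper is behavioral subtyping. You propose a semantic transport lemma --- soundness of the structural rules applied under the command placeholder --- and call it the deepest interlocking piece. The paper's handling is simpler and purely syntactic: since $\refines{D}{C}$ is by definition a proof tree over a placeholder, it instantiates the placeholder with the dispatched body $c_m$, obtaining a syntactic derivation of the base-specification triple for $c_m$ whose leaf is the derived-specification triple guaranteed by method verification; the coinduction hypothesis then applies to that syntactic triple directly, so no separate semantic soundness of the structural fragment is needed and no interlocking arises. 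Your route would also work, and is not actually circular: the semantic frame rule holds by the $h_\text{f}$ quantification in the definition of $\vDash \{P\}\ c\ \{Q\}$, consequence by the assertion-weakening soundness lemma, and existentials trivially, so the auxiliary lemma is provable independently of the main coinduction --- but it is more machinery than necessary, and you must still ensure that the divergence fact obtained from the coinduction hypothesis is ultimately placed under a divergence constructor such as \textsc{ODynamicDispatchDiv3}, exactly as the paper does, to preserve productivity.
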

\begin{proof}
By mutual co-induction and, nested inside of it, induction on the derivation of the correctness judgment. We elaborate a few cases:
\begin{itemize}
\item Case \textsc{HDynamicDispatch}. Assume the following:
$$\begin{array}{c}
c = o \deref m(\overline{v})\\
\st{o}{C}\\
\texttt{class}\ C \cdots {\{} \cdots \texttt{virtual}\ m(\overline{x})\ \texttt{req}\ P_C\ \texttt{ens}\ Q_C \cdots {\}}\\
P = \dyn{o}{D} \wedge P_C[o/\texttt{this}, D/\theta, \overline{v}/\overline{x}]\\
Q = Q_C[o/\texttt{this}, D/\theta, \overline{v}/\overline{x}]\\
\texttt{class}\ D \cdots {\{} \cdots \texttt{virtual}\ m(\overline{x})\ \texttt{req}\ P_D\ \texttt{ens}\ Q_D\ \{c_m\} \cdots {\}}
\end{array}$$
By $h \vDash P$, we have $dtype(o, D) \subseteq h$ and $h \vDash P_C[o/\texttt{this}, D/\theta, \overline{v}/\overline{x}]$.
Let $o' = o \downarrow D$. By the correctness of method $m$ in class $D$, we have
$$
\begin{array}{c}
\{P_D[o'/\texttt{this}, D/\theta, \overline{v}/\overline{x}]\} \\
c_m[o'/\texttt{this}, \overline{v}/\overline{x}] \\
\{Q_D[o'/\texttt{this}, D/\theta, \overline{v}/\overline{x}]\}
\end{array}
$$
By the fact that $m$ in $D$ correctly overrides $m$ in $C$, we have
$$\{P_D\}\_\{Q_D\} \refines{D}{C} \{P_C\}\_\{Q_C\}$$ It follows that
$$
\begin{array}{c}
\{P_C[o/\texttt{this}, D/\theta, \overline{v}/\overline{x}]\} \\
c_m[o'/\texttt{this}, \overline{v}/\overline{x}] \\
\{Q_C[o/\texttt{this}, D/\theta, \overline{v}/\overline{x}]\}
\end{array}
$$
The relevant inference rule for divergence of dynamically dispatched method calls is as follows:
$$\inferrule[ODynamicDispatchDiv3]
{
\texttt{class}\ C \cdots {\{} \cdots \texttt{virtual}\ m(\overline{x}) {\{} c {\}} \cdots {\}} \\
h, e \Downarrow  h', o \\
o' = o \downarrow C \\\\
 h', \overline{e} \Downarrow h'', \overline{v} \\
dtype(o, C) \subseteq h'' \\
h'', c[o'/\texttt{this},\overline{v}/\overline{x}]\,\texttt{div}
}
{
h, e \deref m(\overline{e})\,\texttt{div}
}$$
We apply this rule to the goal, which reduces the goal to $h, c_m[o'/\texttt{this},\overline{v}/\overline{x}]\,\texttt{div}$. We now apply the coinduction hypothesis. We are now left with the job of proving that the body does not terminate, assuming that the call does not terminate. Instead, we prove that the call terminates, assuming that the body terminates. We conclude that proof by applying \textsc{ODynamicDispatch}.

\item Case \textsc{HConsContext}. Assume a constructor argument list $\overline{v}\,e\,\overline{e}$. By the induction hypothesis corresponding to the first premise of \textsc{HConsContext}, we have that evaluation of $e$ either terminates or diverges.
\begin{itemize}
\item Assume $e$ terminates with a value $v$. By the induction hypothesis corresponding to the second premise of \textsc{HConsContext}, we have that $o\deref C(\overline{v}\,v\,\overline{e})$ either terminates or diverges.
\begin{itemize}
\item Assume $o\deref C(\overline{v}\,v\,\overline{e})$ terminates. This must be by an application of \textsc{OConstruct}. Therefore, it must be that $\overline{e}$ all terminate. It follows that $o \deref C(\overline{v}\,e\,\overline{e})$ terminates.
\item Assume $o\deref C(\overline{v}\,v\,\overline{e})$ diverges. Given that $e$ terminates, we can easily prove that $o\deref C(\overline{v}\,e\,\overline{e})$ diverges.
\end{itemize}
\item Assume $e$ diverges. Then $o\deref C(\overline{v}, e, \overline{e})$ diverges.
\end{itemize}
\item Case \textsc{HContext}. We apply Lemma~\ref{lemma:hcontext-soundness} and use the induction hypotheses to discharge the resulting subgoals.\footnote{To see that this preserves productivity of the coinductive proof, notice that Lemma~\ref{lemma:hcontext-soundness} is \emph{size-preserving}: given approximations up to depth $d$ of the proof trees for the lemma's premises, the lemma produces a proof tree of depth at least $d$.}
\end{itemize}
\end{proof}

\end{document}